\documentclass[journal,twoside,web]{ieeecolor}
\usepackage{generic}
\usepackage{cite}
\usepackage{amsmath,amssymb,amsfonts}
\usepackage{gensymb}
\usepackage{siunitx}
\usepackage{graphicx}
\usepackage{booktabs}
\usepackage{textcomp}
\usepackage{mathtools}
\usepackage{graphicx}
\usepackage[caption=false]{subfig}
\usepackage{xcolor}
\usepackage{hyperref}
\usepackage[nolist,nohyperlinks]{acronym}
\def\BibTeX{{\rm B\kern-.05em{\sc i\kern-.025em b}\kern-.08em
    T\kern-.1667em\lower.7ex\hbox{E}\kern-.125emX}}
    
\setlength{\abovedisplayskip}{3pt}
\setlength{\belowdisplayskip}{3pt}
\setlength{\textfloatsep}{1pt}
\newcommand{\sgnote}[1]%
{\textcolor{green}{\textbf{Note: #1}}}
\newcommand{\shnote}[1]%
{\textcolor{blue}{\textbf{Note: #1}}}
\newcommand{\llnote}[1]%
{\textcolor{blue}{\textbf{LL: #1}}}

\newtheorem{lemma}{Lemma}
\newtheorem{theorem}{Theorem}

\newtheorem{assumption}{Assumption}
\newtheorem{remark}{Remark}
\newtheorem{algorithm}{Algorithm}
\newtheorem{proposition}{Proposition}

\usepackage{siunitx} 

\sisetup{
  round-mode          = places, 
  round-precision     = 1, 
}


\begin{document}

\title{\LARGE \bf
Classical Risk-Averse Control for a Finite-Horizon Borel Model
}
\author{Margaret P. Chapman$^\dag$ and Kevin M. Smith$^\ddag$
\thanks{$^\dag$M.P.C. is with the Edward S. Rogers Sr. Department of Electrical and Computer Engineering, University of Toronto, 10 King's College Rd, Toronto, ON M5S 3G8, Canada (email: mchapman@ece.utoronto.ca).}
\thanks{$^\ddag$K.M.S. is with the Department of Civil and Environmental Engineering, Tufts University, 200 College Ave, Medford, MA 02155, United States (email: kevin.smith@tufts.edu).}
\thanks{M.P.C. acknowledges support from the University of Toronto.}
}

\maketitle
\pagestyle{empty}
\thispagestyle{empty}
\begin{abstract}We study a risk-averse optimal control problem for a finite-horizon Borel model, where a cumulative cost is assessed via exponential utility. The setting permits non-linear dynamics, non-quadratic costs, and continuous state and control spaces but is less general than the problem of optimizing an expected utility. Our contribution is to show the existence of an optimal risk-averse controller without using state space augmentation and therefore offer a simpler solution method from first principles compared to what is currently available in the literature.\end{abstract}
\begin{IEEEkeywords}
Stochastic optimal control, Exponential utility, Markov processes
\end{IEEEkeywords}

\section{Introduction}
\IEEEPARstart{R}{ecently}, there has been a renewed interest in risk-sensitive control for various applications, including robotics \cite{majumdar2020should, hammoud2021impedance}, remote state estimation \cite{sun2021remote}, and building evacuation \cite{barreiro2021risk}. 
A classical risk-sensitive control approach is to assess a random cost using the \emph{exponential utility} functional. This functional takes the form $\rho_\theta(G) := \frac{-2}{\theta} \log E(e^{\frac{-\theta}{2}G})$, where $\theta$ is a parameter and $G$ is a non-negative random variable. If $\theta < 0$, then large values of $G$ are exaggerated through the exponential transformation, which represents a risk-averse perspective. In contrast, the case of $\theta > 0$ represents a risk-seeking perspective. It can be shown that $\rho_\theta(G) \approx E(G) - \frac{\theta}{4}\text{var}(G)$ approximates a weighted sum of the mean and variance of $G$ under appropriate conditions, including $|\theta|$ being small \cite{whittle1981}.
Exponential-utility optimal control has been studied since the 1970s, and we first summarize early work.
%

In 1972, Howard and Matheson studied the optimization of an exponential utility criterion for a discrete-time Markov decision process (MDP) with finitely many states \cite{howardmat1972}. 
In 1973, Jacobson considered the problem of optimizing the exponential utility of a quadratic cost for a discrete-time linear system with Euclidean state and control spaces subject to additive Gaussian noise \cite{jacobson1973}. This problem is often called linear-exponential-quadratic-Gaussian (LEQG) or linear-exponential-quadratic-regulator (LEQR) control. LEQG theory was generalized by Whittle, in particular to the case of partially observed states, in the 1980s and 1990s; see \cite{whittle1991} and the references therein. 
Relations between controllers satisfying an $\mathcal{H}_\infty$-norm bound and the infinite-time LEQG controller were analyzed in the late 1980s \cite{gloverdoyle1988, glover1989minimum}. In 1999, di Masi and Stettner studied MDPs on continuous state spaces and discrete infinite-time horizons with exponential utility criteria \cite{masi1999}. In the early 2000s, relations between robust model predictive control (MPC) and MPC with an exponential utility criterion were investigated in the linear-quadratic setting \cite[Chap. 8.3]{minimaxmpc}. The exponential utility criterion belongs to the broader family of \emph{expected utility} criteria, which measure risk by transforming a random cost based on a user's subjective preferences.

Additional methods for quantifying and optimizing risk are presented by \cite[Chap. 6]{shapiro2009lectures}, \cite{ruszczynski2010risk}, for example. Specifically, Ruszczy{\'n}ski considered the optimization of a \emph{nested risk functional} for a discrete-time Borel-space MDP \cite{ruszczynski2010risk}, and related formulations have been studied, e.g., see \cite{shen2013risk, bauerle2021markov}. A nested risk functional takes the form $\rho_1(Z_1 + \rho_2(Z_2 + \cdots +  \rho_{N-1}(Z_{N-1} + \rho_N(Z_N))\cdots))$, where $Z_i$ is a random variable and $\rho_i$ is a mapping between spaces of random variables \cite{ruszczynski2010risk}. This functional is not straightforward to interpret, but it can be optimized using dynamic programming (DP) on the state space. In 2021, connections between nested risk functionals and distributionally robust MDPs were drawn \cite{bauerle2021markov}, and similar ideas were suggested earlier, e.g., see \cite[Eq. 4.11]{shapiro2012minimax}.

Another approach to risk-sensitive control is to optimize an expected cumulative cost subject to a risk constraint, for instance, see \cite{borkar2014risk, haskell2015convex, van2015distributionally, samuelson2018safety, tsiamis2020risk}. In particular, Tsiamis et al. considered a variance-like constraint in a linear-quadratic setting \cite{tsiamis2020risk}, whereas Refs. \cite{borkar2014risk, van2015distributionally, samuelson2018safety} considered \emph{Conditional Value-at-Risk} (CVaR) constraints. In contrast to expected utility criteria, CVaR is a quantile-based measure that quantifies an average cost in a fraction of worst cases. In prior work, we developed a controller using a CVaR objective \cite{chapman2021toward} and a safety analysis framework using CVaR \cite{mpctacsubmission}. Risk-constrained MDPs and MDPs with expected utility or CVaR criteria were studied by \cite{haskell2015convex} in an infinite-time setting using occupation measures and state space augmentation. State space augmentation is a technique for tracking history-dependent information that may be needed to characterize the stage-wise sub-problems of a multi-stage optimization problem. This technique is not needed when a sub-problem can be written in terms of the current state rather than the current and prior states.
More details about risk-sensitive MDPs can be found in \cite{bauerle2021markov}, for example.

While these maturing approaches to risk-sensitive control are intriguing, this paper concerns exponential utility, which remains popular in the literature. Specifically, the focus here is optimizing exponential utility for an MDP on a discrete finite-time horizon with Borel state and control spaces, which we call Problem A for convenience. Problem A permits non-Euclidean spaces, non-linear dynamics, non-quadratic costs, and non-Gaussian noise, and therefore generalizes the LEQG setting. In 2014, B{\"a}uerle and Rieder developed the state-of-the-art approach for solving Problem A and a broader class of MDP problems with expected utility criteria \cite{bauerlerieder}. However, the methodology employs an augmented state space, in which an extra state records the cumulative cost thus far \cite[Thm. 1, Cor. 1]{bauerlerieder}. Our contribution is to demonstrate that the additional complexity of state space augmentation is not required to solve Problem A, as we provide an alternate solution pathway using first principles from measure theory and real analysis.

There are key distinctions between our paper and the existing literature on MDPs with exponential utility criteria, which we outline below.
Many works concern MDPs on countable state spaces, see \cite{bielecki1999risk, hernandez1999existence, cavazos2010optimality, cavazos2011discounted, blancas2020discounted} 
for some examples, whereas we consider the more general setting of Borel spaces. As mentioned previously, the proof of \cite[Thm. 1, Cor. 1]{bauerlerieder} uses a Bellman equation that is defined on an augmented state space, whereas our approach does not use an augmented state space. Numerous papers about discrete-time Borel-space MDPs examine infinite-time settings, e.g., \cite{masi1999, di2000infinite, di2007infinite, jaskiewicz2007average, jaskiewicz2007note, asienkiewicz2017note, anantharam2017variational}, which naturally require different techniques compared to our work. di Masi and Stettner studied infinite-time problems in which the stage cost is continuous using a span contraction approach and two discounting approaches \cite{masi1999, di2000infinite}. Later, di Masi and Stettner generalized their work by approximating an MDP with one that is uniformly ergodic \cite{di2007infinite}. In contrast, our paper concerns a finite-time horizon and lower semi-continuous costs, and thus requires measure-theoretic arguments that are properly adapted to this setting. 
Ja{\'s}kiewicz and colleagues examined infinite-time problems using a vanishing discount factor approach \cite{jaskiewicz2007average}, MDPs in which the transition kernel only depends on the current control \cite{jaskiewicz2007note}, and later using the Banach fixed point theorem \cite{asienkiewicz2017note}. Anantharam and Borkar studied an infinite-time reward maximization problem using occupation measures \cite{anantharam2017variational}. While we focus on the discrete-time case, we note that continuous-time MDPs with exponential utility criteria have been examined by \cite{wei2016continuous, zhang2017continuous, pal2019risk, guo2019risk}, for example.

Notation: If $\mathcal{M}$ is a metrizable space, $\mathcal{B}_{\mathcal{M}}$ is the Borel sigma algebra on $\mathcal{M}$. $\mathcal{P}(D)$ is the set of probability measures on $(D,\mathcal{B}_{D})$ with the weak topology, where $D$ is a Borel space. Capital letters denote random objects, while lower-case letters denote the associated values; e.g., $x_t$ is a value of $X_t$. We define $\mathbb{T} := \{0,1,\dots,N-1\}$ and $\mathbb{T}_N := \{0,1,\dots,N\}$, where $N \in \mathbb{N}$ is given. $\mathbb{R}^* := \mathbb{R} \cup \{-\infty,+\infty\}$ is the extended real line. We abbreviate lower semi-continuous as lsc.

\section{Problem Statement}\label{secII}
Let $S$, $A$, and $D$ be Borel spaces of states, controls, and disturbances, respectively.
Consider a system on a discrete finite-time horizon of length $N \in \mathbb{N}$ of the form
\begin{equation}\label{dynamics}
    x_{t+1} = f_t(x_t, u_t, w_t) \; \; \; \forall t \in \mathbb{T},
\end{equation}
where $x_t \in S$, $u_t \in A$, and $w_t \in D$ are values of the random state $X_t$, the random control $U_t$, and the random disturbance $W_t$, respectively. The initial state $X_0$ is fixed at an arbitrary initial condition $x \in S$. The dynamics function $f_t : S \times A \times D \rightarrow S$ is Borel measurable. Given $(X_t,U_t)$, the disturbance $W_t$ is conditionally independent of $W_s$ for all $s \neq t$. The distribution of $W_t$, $p_t(\mathrm{d}w_t|x_t,u_t)$, is a Borel-measurable stochastic kernel on $D$ given $S \times A$. That is, the function $\gamma_t : S \times A \rightarrow \mathcal{P}(D)$ defined by $\gamma_t(x_t,u_t) := p_t(\mathrm{d}w_t|x_t,u_t)$ is Borel measurable. If $(x_t,u_t) \in S \times A$ is the value of $(X_t,U_t)$, then the distribution of $X_{t+1}$ is given by
\begin{equation}\label{qt}
    q_t(B|x_t,u_t) := p_t\big(\{w_t \in D : f_t(x_t,u_t,w_t) \in B \}\big|x_t,u_t\big)
\end{equation}
for all $B \in \mathcal{B}_S$ and $t \in \mathbb{T}$. We consider the class of deterministic Markov policies $\Pi$. Each $\pi \in \Pi$ takes the form $\pi = (\mu_0,\mu_1,\dots,\mu_{N-1})$ such that $\mu_t : S \rightarrow A$ is Borel measurable for each $t \in \mathbb{T}$.

We aim to define and optimize a random cost (where we specify the precise notion of optimality later). For this task, we are required to define a probability space $(\Omega,\mathcal{B}_\Omega,P_x^\pi)$, which is parametrized by an initial condition $x \in S$ and a policy $\pi \in \Pi$. The sample space $\Omega$ is defined by $\Omega := (S \times A)^N \times S$. That is, an element $\omega = (x_0,u_0,\dots,x_{N-1},u_{N-1},x_N) \in \Omega$ is a value of the random trajectory $(X_0,U_0,\dots,X_{N-1},U_{N-1},X_N)$. The coordinates of $\omega$ have causal dependencies due to the form of the dynamics \eqref{dynamics} and the class of policies $\Pi$. The distribution of the random trajectory is given by a probability measure $P_x^\pi$ on $(\Omega, \mathcal{B}_\Omega)$. The form of $P_x^\pi$ allows us to define a DP recursion for computing an optimal policy under certain conditions (to be specified).

Let $\delta_x$ denote the Dirac measure on $(S, \mathcal{B}_{S})$ concentrated at $x$. With slight abuse of notation, let $\delta_{\mu_t(x_t)}$ denote the Dirac measure on $(A,\mathcal{B}_A)$ concentrated at $\mu_t(x_t)$, where $x_t \in S$ is a value of $X_t$. Let $B \in \mathcal{B}_\Omega$ be a measurable rectangle, i.e., $B = B_{X_0} \times B_{U_0} \times B_{X_1} \times B_{U_1} \times \cdots \times B_{X_N}$, where $B_{X_i} \in \mathcal{B}_{S}$ for all $i \in \mathbb{T}_N$ and $B_{U_j} \in \mathcal{B}_A$ for all $j \in \mathbb{T}$. Then, we have \begin{equation}\label{pxpi} \begin{aligned} & P_x^\pi(B) \\ &=  \textstyle \int_{B_{X_0}} \int_{B_{U_0}} \int_{B_{X_1}} \int_{B_{U_1}} \cdots \int_{B_{X_N}} q_{N-1}(\mathrm{d}x_N|x_{N-1},u_{N-1}) \\ & \hphantom{=}\;\; \cdots  \delta_{\mu_1(x_1)}(\mathrm{d}u_1) \;  q_0(\mathrm{d}x_1|x_0,u_0) \; \delta_{\mu_0(x_0)}(\mathrm{d}u_0)  \; \delta_x(\mathrm{d}x_0). \end{aligned} \end{equation} 
The nested integrals should be taken from ``the inside to the outside.'' The integral with respect to $x_N \in S$ is taken over the set $B_{X_N}$; the integral with respect to $u_1 \in A$ is taken over $B_{U_1}$, etc. The reader may refer to \cite[Prop. C.10, Remark C.11, p. 178]{hernandez2012discrete} or \cite[Prop. 7.28, pp. 140--141]{bertsekas2004stochastic} for details. 

If $G :\Omega \rightarrow \mathbb{R}^*$ is Borel measurable and non-negative, then the expectation of $G$ with respect to $P_x^\pi$, $E_x^\pi(G)$, is given by
\begin{equation} \begin{aligned} 
& \textstyle \int_\Omega G(\omega)\; \mathrm{d} P_x^\pi(\omega) \\
   & = \textstyle \int_{S} \int_{A} \int_{S} \int_{A} \cdots \int_{S} G(x_0,u_0,\dots,x_{N-1},u_{N-1},x_N) \\
   & \hphantom{:=}\;\;
   q_{N-1}(\mathrm{d}x_N|x_{N-1},u_{N-1}) \cdots \delta_{\mu_1(x_1)}(\mathrm{d}u_1) \\ &\hphantom{:=}\; \; q_0(\mathrm{d}x_1|x_0,u_0) \; \delta_{\mu_0(x_0)}(\mathrm{d}u_0)  \; \delta_x(\mathrm{d}x_0). \end{aligned} 
\end{equation}  
$G$ is an extended random variable on $(\Omega,\mathcal{B}_\Omega,P_x^\pi)$ for each $x \in S$ and $\pi \in \Pi$.

We consider a particular random variable $Z$, representing a cost, that is incurred as the system operates over time. For any value $\omega = (x_0,u_0,\dots,x_{N-1},u_{N-1},x_N) \in \Omega$ of the random trajectory, we define
\begin{equation}\label{myZ}
    Z(\omega) := \textstyle \sum_{t = 0}^{N-1} c_t(x_t,u_t) + c_N(x_N),
\end{equation}
where $c_t : S \times A \rightarrow \mathbb{R}$ and $c_N : S \rightarrow \mathbb{R}$ are Borel measurable and bounded. In \eqref{myZ}, $x_t$ is the value of $X_t$ and $u_t$ is the value of $U_t$ associated with the trajectory value $\omega$.


A standard (risk-neutral) approach to manage $Z$ is to minimize its expectation $E_x^\pi(Z)$ over the class of policies $\Pi$. An alternative approach is to use the (risk-averse) \emph{exponential utility} functional. Let $\theta \in \Theta \subseteq (-\infty,0)$ be given, and define the optimal value function $V_\theta^* : S \rightarrow \mathbb{R}^*$ as follows:
\begin{equation}\label{vthetastar}
    V_\theta^*(x) := \inf_{\pi \in \Pi} \textstyle\frac{-2}{\theta}\log E_x^\pi\big(e^{\frac{-\theta}{2} Z}\big).
\end{equation}
If there is a policy $\pi_\theta^* \in \Pi$ such that $V_\theta^*(x) = \textstyle\frac{-2}{\theta}\log E_x^{\pi_\theta^*}\big(e^{\frac{-\theta}{2} Z}\big)$ for all $x \in S$, we say that $\pi_\theta^*$ is \emph{optimal for $V_\theta^*$}. The next assumption ensures that such a policy exists.
\begin{assumption}[Measurable Selection]\label{measselect}
We assume that \begin{enumerate}
\item $p_t(\mathrm{d}w_t|x_t,u_t)$ is a continuous stochastic kernel for all $t \in \mathbb{T}$. That is, the function $\gamma_t : S \times A \rightarrow \mathcal{P}(D)$ defined by $\gamma_t(x_t,u_t) := p_t(\mathrm{d}w_t|x_t,u_t)$ is continuous.
    \item $f_t$ is continuous for all $t \in \mathbb{T}$. $c_t$ is lower semi-continuous and bounded for all $t \in \mathbb{T}_N$.
    \item The set of controls $A$ is compact.
\end{enumerate}
\end{assumption}
\begin{remark}[Justification of Assumption \ref{measselect}]
Assumption \ref{measselect} is an example of a measurable selection condition. Such conditions are standard in stochastic control problems on Borel spaces, e.g., see \cite[Def. 8.7]{bertsekas2004stochastic}, \cite[Sec. 3.3]{hernandez2012discrete}. In a risk-neutral problem, which optimizes $E_x^\pi(Z)$, it is common to assume that $c_t$ is only bounded below. However, assuming that $c_t$ is bounded simplifies arguments for risk-sensitive MDPs, e.g., see \cite{bauerlerieder, haskell2015convex, asienkiewicz2017note}. $c_t$ being bounded ensures that $Z$ is bounded, which implies that $V_\theta^*$ is finite for all $\theta \in \Theta$.
\end{remark}

\section{Dynamic Programming Algorithm}\label{secIII}
Next, we provide a DP algorithm for $V_\theta^*$.
\begin{algorithm}[DP for $V_\theta^*$]\label{valalgwhittle}
For any $\theta \in \Theta$, define $V_{t}^\theta : S \rightarrow \mathbb{R}^*$ recursively, $V_{N}^\theta(x) := c_N(x)$ and for $t = N-1, \dots,1, 0$,
\begin{subequations}\label{Vt}
\begin{equation}\label{10b}\begin{aligned}
& V_{t}^\theta(x) := {\underset{u \in A}\inf} v_{t+1}^\theta(x,u),
\end{aligned}\end{equation}
where $v_{t+1}^\theta : S \times A \rightarrow \mathbb{R}^*$ is defined by
\begin{equation}\label{mypsi}\begin{aligned}
v_{t+1}^\theta(x,u) &:= c_t(x,u) + \psi_t^{\theta}(x,u),\\
    \psi_t^{\theta}(x,u) & := \textstyle {\frac{-2}{\theta}} \log \textstyle (\int_{D} e^{\frac{-\theta}{2} V_{t+1}^\theta(f_t(x,u,w))} \; p_t(\mathrm{d}w|x,u)).
\end{aligned}\end{equation}
\end{subequations}
\end{algorithm}

Algorithm \ref{valalgwhittle} is a backwards recursion that applies an exponential transformation to the cost-to-go $V_{t+1}^\theta$ and resembles existing formulations, e.g., see \cite[Eq. 2.3]{masi1999}, \cite[Remark 1]{bauerlerieder}, and \cite[Eq. 1.1]{asienkiewicz2017note}. Our contribution is not the algorithm itself but instead the direct pathway that we follow to solve the risk-averse control problem of interest. Namely, we use first principles from real analysis and measure theory and build on arguments from \cite{bertsekas2004stochastic, ash1972} to prove two theorems. Theorem \ref{lscremark} shows that $V_{t}^\theta$ is lower semi-continuous (lsc) and bounded, and there is a Borel-measurable function $\hspace{-.5mm}\mu_{t}^\theta \hspace{-.5mm} : \hspace{-.5mm}S\hspace{-.5mm} \rightarrow\hspace{-.5mm} A\hspace{-.5mm}$ such that
\begin{equation}\label{existinf}
   V_{t}^\theta(x) = v_{t+1}^\theta(x,\mu_{t}^\theta(x)) \; \; \; \forall x \in S.
\end{equation}
In Theorem \ref{optimality}, we show that
the (non-unique) policy $\pi_\theta^*:= (\mu_{0}^\theta, \mu_{1}^\theta, \dots, \mu_{N-1}^\theta)$ is optimal for $V_\theta^*$ and $V_{0}^\theta =  V_\theta^*$.

%
\section{Analysis of Dynamic Programming Iterates}
In this section, we first prove Proposition \ref{prop1}, which shows that the composition of a continuous increasing function and a bounded lsc function is lsc. Then, we use Proposition \ref{prop1} and another preliminary result (Lemma \ref{phiprimeislsc}, Appendix) to analyze the DP iterates $V_0^\theta, V_1^\theta, \dots, V_N^\theta$ in Theorem \ref{lscremark}.

\begin{proposition}[Comp. of con't, inc. and lsc]\label{prop1}
Let $\mathcal{M}$ be a metrizable space. Assume that $\mathcal{Y}_i := (a_i,b_i) \subseteq \mathbb{R}$ is non-empty for $i = 1,2$. Suppose that $\kappa_1 : \mathcal{Y}_1 \rightarrow \mathcal{Y}_2$ is continuous and increasing, and $\kappa_2 : \mathcal{M} \rightarrow \mathcal{Y}_1$ is lsc and bounded. (There are scalars $\underline{c}$ and $\overline{c}$ such that $[\underline{c}, \overline{c}] \subset \mathcal{Y}_1$ and $\underline{c} \leq \kappa_2(y) \leq \overline{c}$ for all $y \in \mathcal{M}$.) Then, $\kappa_1 \circ \kappa_2 : \mathcal{M} \rightarrow \mathcal{Y}_2$ is lsc.
\end{proposition}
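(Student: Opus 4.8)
The plan is to prove lower semi-continuity through the open-superlevel-set characterization: a function $g:\mathcal{M}\to\mathbb{R}$ is lsc if and only if the set $\{y\in\mathcal{M}: g(y) > t\}$ is open for every $t\in\mathbb{R}$. On a metrizable space this is equivalent to the sequential criterion $\liminf g(y_n)\ge g(y)$, which I keep in reserve as an alternative route. So I fix $t\in\mathbb{R}$ and analyze the set $U_t := \{y : \kappa_1(\kappa_2(y)) > t\}$, aiming to show it is open.

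First I would study the one-dimensional superlevel set $I_t := \{s \in \mathcal{Y}_1 : \kappa_1(s) > t\}$. Since $\kappa_1$ is continuous, $I_t = \kappa_1^{-1}((t,+\infty))$ is open in $\mathcal{Y}_1$; since $\kappa_1$ is increasing, $I_t$ is upward closed (if $s\in I_t$ and $s'\ge s$, then $\kappa_1(s')\ge\kappa_1(s)>t$). An open, upward-closed subset of the interval $\mathcal{Y}_1=(a_1,b_1)$ is necessarily of the form $(\alpha,b_1)$ with $\alpha := \inf I_t \in [a_1,b_1]$, interpreting $(\alpha,b_1)=\emptyset$ when $\alpha=b_1$ and $(\alpha,b_1)=\mathcal{Y}_1$ when $\alpha=a_1$. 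The left endpoint is excluded: if $\kappa_1(\alpha) > t$, then continuity would force $\kappa_1 > t$ on a neighborhood to the left of $\alpha$, contradicting the definition of the infimum.

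Next I would translate back to $\mathcal{M}$. By construction $U_t = \kappa_2^{-1}(I_t) = \{y : \alpha < \kappa_2(y) < b_1\}$. Here the boundedness hypothesis does the essential work: since $\kappa_2(y) \le \overline{c} < b_1$ for all $y$, the upper constraint is vacuous, so $U_t = \{y : \kappa_2(y) > \alpha\}$. This last set is open precisely because $\kappa_2$ is lsc. Hence $U_t$ is open for every $t$, which is exactly lower semi-continuity of $\kappa_1\circ\kappa_2$. I note also that boundedness guarantees $\kappa_2(\mathcal{M})\subseteq[\underline{c},\overline{c}]\subset\mathcal{Y}_1$, so that the composition is well defined in the first place.

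I expect the only delicate point to be the structural claim that $I_t$ is a left-open interval $(\alpha,b_1)$ --- in particular verifying that the left endpoint is genuinely excluded and that the right endpoint never intervenes. Both are handled by the two standing hypotheses on the pair: continuity and monotonicity of $\kappa_1$ give the set its interval shape and its open left end, while boundedness of $\kappa_2$ keeps its range away from $b_1$, so that the superlevel set of the composition reduces cleanly to a superlevel set of $\kappa_2$. Everything else is routine.
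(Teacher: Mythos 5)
Your proof is correct, and it takes a genuinely different route from the paper's. The paper works directly with the sequential definition of lower semi-continuity: for a sequence $x^i \to x$ it forms the increasing tail infima $\inf_{k \geq i} \kappa_2(x^k)$, uses the bounds $[\underline{c},\overline{c}] \subset \mathcal{Y}_1$ to guarantee that their limit lies in the domain of $\kappa_1$ (this is exactly what the paper's footnote highlights), exchanges $\kappa_1$ with the limit by continuity, and then applies monotonicity twice to bound $\liminf_{i} \kappa_1(\kappa_2(x^i))$ below by $\kappa_1(\kappa_2(x))$. You instead argue topologically: the superlevel set $I_t = \{s \in \mathcal{Y}_1 : \kappa_1(s) > t\}$ is open (continuity) and upward closed (monotonicity), hence equals $(\alpha, b_1)$ with the left endpoint correctly excluded by your infimum argument, and the superlevel set of the composition pulls back to $\{y : \kappa_2(y) > \alpha\}$, which is open by lsc of $\kappa_2$; the degenerate cases $\alpha = a_1$ and $\alpha = b_1$ (including infinite endpoints) give $\mathcal{M}$ and $\emptyset$, trivially open. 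Two observations on what each approach buys. First, your route relies on the standard equivalence, valid on metrizable spaces, between sequential lsc and openness of all superlevel sets, and you need both directions of it (one for $\kappa_2$, one for the composition); the paper never invokes this equivalence, staying with sequences throughout, in keeping with its first-principles style. Second, your argument is more economical with the hypotheses: the upper constraint $\kappa_2(y) < b_1$ already follows from the codomain $\kappa_2 : \mathcal{M} \to \mathcal{Y}_1$ (you cite $\overline{c}$ for this, but it is not needed), and $\underline{c}$ plays no role at all, so your proof in fact establishes the proposition without the parenthetical boundedness assumption. In the paper's argument, by contrast, the bounds are load-bearing: they confine the limit of the tail infima to the compact set $[\underline{c},\overline{c}] \subset \mathcal{Y}_1$ where continuity of $\kappa_1$ can be applied, whereas without them that limit could sit on the boundary of $\mathcal{Y}_1$, where $\kappa_1$ is undefined.
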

\begin{proof}
To show that $\kappa_1 \circ \kappa_2$ is lsc, we must show that
   $ \liminf_{i \rightarrow \infty}  \kappa_1\big(\kappa_2(x^i)\big) \geq \kappa_1\big(\kappa_2(x)\big)$,
where $\{x^i\}_{i = 1}^\infty$ is a sequence in $\mathcal{M}$ converging to $x \in \mathcal{M}$.\footnote{A key aspect of the proof of Proposition \ref{prop1} is the use of the bounds $\underline{c}$ and $\overline{c}$ to guarantee that a limit inferior is in the domain of $\kappa_1$. This and the continuity of $\kappa_1$ allow us to exchange the order of a limit and $\kappa_1$.} Since $\{x^i\}_{i = 1}^\infty$ converges to $x$ and $\kappa_2$ is lsc, it holds that
\begin{equation}\label{my717171}
  \lim_{i \rightarrow \infty} \inf_{k \geq i} \kappa_2(x^k) := \liminf_{i \rightarrow \infty} \kappa_2(x^i) \geq \kappa_2(x).
\end{equation}
Since $\kappa_2$ is bounded below by $\underline{c}$ and above by $\overline{c}$, we have
\begin{equation}
\underline{c}   \leq \inf_{k \geq i} \kappa_2(x^k) \leq \inf_{k \geq i+1} \kappa_2(x^k) \leq \overline{c} \; \; \; \forall i \in \mathbb{N},
\end{equation}
which implies that
  $\underline{c} \leq \lim_{i \rightarrow \infty} \inf_{k \geq i} \kappa_2(x^k) \leq \overline{c}$. 
Since $\{\inf_{k \geq i} \kappa_2(x^k)\}_{i=1}^\infty$ is a sequence in $[\underline{c}, \overline{c}]$ which converges to a point in $[\underline{c}, \overline{c}]$, $[\underline{c}, \overline{c}]$ is a non-empty subset of $\mathcal{Y}_1$, and $\kappa_1$ is continuous on $\mathcal{Y}_1$, we find that
\begin{equation}\label{my7474}
   \kappa_1 \Big( \lim_{i \rightarrow \infty} \inf_{k \geq i} \kappa_2(x^k) \Big) = \lim_{i \rightarrow \infty} \kappa_1 \Big( \inf_{k \geq i} \kappa_2(x^k) \Big).
\end{equation}
Since $\kappa_1$ is increasing and by \eqref{my717171}, we have
\begin{equation}\label{my757575}
 \kappa_1 \Big( \lim_{i \rightarrow \infty} \inf_{k \geq i} \kappa_2(x^k) \Big)  \geq \kappa_1 ( \kappa_2(x) ).
\end{equation}
Moreover, since $\kappa_1$ is increasing, for any $i \in \mathbb{N}$, it holds that
\begin{equation}
   \forall k \geq i, \; \; \; \kappa_1(\kappa_2(x^k)) \geq \kappa_1 \Big( \inf_{k \geq i} \kappa_2(x^k) \Big).
\end{equation}
Thus, $\kappa_1 \big( \inf_{k \geq i} \kappa_2(x^k) \big) \in \mathbb{R}$ is a lower bound for the set $\big\{ \kappa_1(\kappa_2(x^k)) : k \geq i\big\}$, which implies that
\begin{equation}\label{my7777}
    \inf_{k \geq i} \kappa_1(\kappa_2(x^k))  \geq \kappa_1 \Big( \inf_{k \geq i} \kappa_2(x^k) \Big).
\end{equation}
By letting $i$ tend to infinity, it holds that
\begin{equation}
    \lim_{i \rightarrow \infty}  \inf_{k \geq i} \kappa_1(\kappa_2(x^k)) \geq \lim_{i \rightarrow \infty} \kappa_1 \Big( \inf_{k \geq i} \kappa_2(x^k) \Big),
\end{equation}
which is equivalent to
\begin{equation}
\liminf_{i \rightarrow \infty} \kappa_1(\kappa_2(x^i)) \geq \kappa_1 \Big( \lim_{i \rightarrow \infty} \inf_{k \geq i} \kappa_2(x^k) \Big)
\end{equation}
by \eqref{my7474}. Finally, by \eqref{my757575}, we derive the desired result.
\end{proof}

Next, we use Proposition \ref{prop1} to prove Theorem \ref{lscremark}.
\begin{theorem}[Properties of $V_{t}^\theta$]\label{lscremark}
Assume Assumption \ref{measselect}. For all $t \in \mathbb{T}_N$, $V_{t}^\theta$ is lsc and bounded. For all $t \in  \mathbb{T}$, there is a Borel-measurable function $\mu_{t}^\theta : S \rightarrow A$ such that \eqref{existinf} holds. 
\end{theorem}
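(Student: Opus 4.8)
The plan is to argue by backward induction on $t$, descending from $t=N$ to $t=0$, propagating the two properties ``$V_t^\theta$ is lsc and bounded'' together with the existence of a measurable minimizer. The base case $t=N$ is immediate: $V_N^\theta = c_N$ is lsc and bounded directly by Assumption \ref{measselect}(2). For the inductive step I would assume $V_{t+1}^\theta$ is lsc and bounded, and then show in sequence that $\psi_t^\theta$ is lsc and bounded, that $v_{t+1}^\theta = c_t + \psi_t^\theta$ is lsc and bounded, and finally that the infimum in \eqref{10b} yields an lsc bounded $V_t^\theta$ attained by a Borel-measurable selector.

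The core of the inductive step is unwrapping the definition of $\psi_t^\theta$ in \eqref{mypsi} layer by layer, exploiting the sign of $\theta$. First, since $f_t$ is continuous (Assumption \ref{measselect}(2)) and $V_{t+1}^\theta$ is lsc and bounded by the inductive hypothesis, the composition $(x,u,w)\mapsto V_{t+1}^\theta(f_t(x,u,w))$ is lsc and bounded on $S\times A\times D$. Because $\theta<0$, the map $y\mapsto e^{\frac{-\theta}{2}y}$ is continuous and increasing, so Proposition \ref{prop1} (applied with $\kappa_1(y)=e^{\frac{-\theta}{2}y}$ and $\kappa_2=V_{t+1}^\theta\circ f_t$, whose bounded range sits in a suitable open interval) shows that $(x,u,w)\mapsto e^{\frac{-\theta}{2}V_{t+1}^\theta(f_t(x,u,w))}$ is lsc, and it is bounded between two positive constants. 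Next I would invoke Lemma \ref{phiprimeislsc} from the Appendix to pass the integral through the continuous stochastic kernel $p_t(\mathrm{d}w\mid x,u)$ (Assumption \ref{measselect}(1)): integrating a bounded lsc function against a continuous kernel preserves lower semicontinuity, so the inner integral $(x,u)\mapsto\int_D e^{\frac{-\theta}{2}V_{t+1}^\theta(f_t(x,u,w))}\,p_t(\mathrm{d}w\mid x,u)$ is lsc and, since $p_t(\cdot\mid x,u)$ is a probability measure, bounded between the same positive constants. A second application of Proposition \ref{prop1}, now with the continuous increasing map $\kappa_1=\frac{-2}{\theta}\log(\cdot)$ (again using $\theta<0$) composed with this bounded lsc integral, delivers that $\psi_t^\theta$ is lsc and bounded. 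Adding the lsc bounded $c_t$ gives that $v_{t+1}^\theta$ is lsc and bounded.

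Finally, for \eqref{10b} I would invoke a measurable selection theorem for the partial minimization of a lsc bounded function over the compact control set $A$ (Assumption \ref{measselect}(3)), of the type in \cite[Prop. 7.33]{bertsekas2004stochastic}: lower semicontinuity in $(x,u)$ together with compactness of $A$ guarantees both that the infimum $V_t^\theta(x)=\inf_{u\in A}v_{t+1}^\theta(x,u)$ is attained, is lsc, and is achieved by a Borel-measurable $\mu_t^\theta:S\to A$ satisfying \eqref{existinf}; boundedness of $V_t^\theta$ follows from that of $v_{t+1}^\theta$. This closes the induction. I expect the two genuinely load-bearing steps to be, first, the preservation of lower semicontinuity under integration against the continuous kernel (which is precisely why Lemma \ref{phiprimeislsc} is isolated as a separate preliminary result), and second, verifying the hypotheses of the measurable selection theorem, in particular compactness of $A$ and the joint lower semicontinuity of $v_{t+1}^\theta$; the repeated use of the strict negativity of $\theta$ to ensure the outer transformations are increasing is the recurring subtlety that makes Proposition \ref{prop1} applicable at each layer.
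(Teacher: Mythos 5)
Your proposal is correct and follows essentially the same route as the paper: the same backward induction, the same layer-by-layer decomposition of $\psi_t^\theta$ using Proposition \ref{prop1} twice (once for the exponential, once for the logarithm), Lemma \ref{phiprimeislsc} for lower semicontinuity of the integral against the continuous kernel, and \cite[Prop. 7.33]{bertsekas2004stochastic} for the measurable selector over the compact set $A$. The only cosmetic difference is that you absorb the positive constants $\frac{-\theta}{2}$ and $\frac{-2}{\theta}$ into the increasing outer map $\kappa_1$, whereas the paper keeps them attached to the inner lsc function and handles them as separate positive-scalar multiplications; this is immaterial.
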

\begin{proof} By induction. For brevity, denote $\mathcal{S} := S \times A \times D$. $V_N^\theta = c_N$ is lsc and bounded by Assumption \ref{measselect}. Now, suppose (the induction hypothesis) that for some $t \in \mathbb{T}$, $V_{t+1}^\theta$ is lsc and bounded. The key step is to show that $\psi_t^\theta$ \eqref{mypsi} is lsc and bounded.\footnote{If $\psi_t^\theta$ is lsc and bounded, then $v_{t+1}^\theta = c_t + \psi_t^\theta$ is lsc and bounded because the sum of two lsc and bounded functions is lsc and bounded. Since $v_{t+1}^\theta$ is bounded, $V_{t}^\theta$ is bounded \eqref{10b}. The remaining desired conclusions follow from a known result, which we describe next. Since $v_{t+1}^\theta : S \times A \rightarrow \mathbb{R}$ is lsc, $A$ is compact, and $V_{t}^\theta(x) = \inf_{u \in A} v_{t+1}^\theta(x,u)$ $\forall x \in S$, it holds that $V_{t}^\theta$ is lsc, and there is a Borel-measurable function $\mu_t^\theta : S \rightarrow A$ such that $V_{t}^\theta(x) = v_{t+1}^\theta(x,\mu_{t}^\theta(x))$ $\forall x \in S$
by a special case of \cite[Prop. 7.33, p. 153]{bertsekas2004stochastic}. In summary, if $\psi_t^\theta$ is lsc and bounded, then a) $V_{t}^\theta$ is lsc and bounded and b) a Borel-measurable function $\mu_t^\theta$ satisfying \eqref{existinf} exists. This logic repeats backwards in time to complete the proof.} Boundedness of $\psi_t^\theta$ follows from boundedness of $V_{t+1}^\theta$. For showing boundedness, note that the function $\phi_t : \mathcal{S} \rightarrow \mathbb{R}$ defined by
\begin{equation}\label{myphi}
   \phi_t(x,u,w) := e^{\frac{-\theta}{2}V_{t+1}^\theta(f_t(x,u,w))}
\end{equation}
is non-negative and bounded. (We drop the superscript $\theta$ on the left-hand-side for brevity.) Also, for any $(x,u) \in S \times A$, the function $\phi_t(x,u,\cdot) : D \rightarrow \mathbb{R}$ is Borel measurable because it is a composition of Borel-measurable functions. In addition, since $p_t(\mathrm{d}w|x,u)$ is a probability measure on $(D, \mathcal{B}_{D})$, it follows that the (Lebesgue) integral
\begin{equation}\label{myphitprime}
    \phi_t'(x,u) := \textstyle \int_{D} \phi_t(x,u,w) \; p_t(\mathrm{d}w|x,u)
\end{equation}
exists and is finite for all $(x,u) \in S \times A$. 

To complete the proof, we show that $\psi_t^\theta$ \eqref{mypsi} is lsc. Since $f_t$ is continuous by Assumption \ref{measselect}, $V_{t+1}^\theta$ is lsc by the induction hypothesis, and $\frac{-\theta}{2} > 0$, the function $g_t : \mathcal{S} \rightarrow \mathbb{R}$ defined by
\begin{equation}\label{Myg}
 \textstyle   g_t(x,u,w) := \frac{-\theta}{2}V_{t+1}^\theta(f_t(x,u,w))
\end{equation}
is lsc. Indeed, let $\{(x^i,u^i,w^i)\}_{i=1}^\infty$ be a sequence in $\mathcal{S}$ converging to $(x,u,w) \in \mathcal{S}$. To show that $V_{t+1}^\theta \circ f_t$ is lsc, we must prove that
\begin{equation}
    \liminf_{i \rightarrow \infty} V_{t+1}^\theta(f_t(x^i,u^i,w^i)) \geq V_{t+1}^\theta(f_t(x,u,w)).
\end{equation}
Since $\{(x^i,u^i,w^i)\}_{i=1}^\infty$ converges to $(x,u,w)$ and $f_t$ is continuous, $\{f_t(x^i,u^i,w^i)\}_{i=1}^\infty$ converges to $f_t(x,u,w)$. Since the latter is a converging sequence in $S$ and $V_{t+1}^\theta : S \rightarrow \mathbb{R}$ is lsc, we have
\begin{equation}\label{466}
    \liminf_{i \rightarrow \infty} V_{t+1}^\theta(f_t(x^i,u^i,w^i)) \geq V_{t+1}^\theta(f_t(x,u,w)),
\end{equation}
which proves that the composition $V_{t+1}^\theta \circ f_t$ is lsc. Since $V_{t+1}^\theta \circ f_t$ is lsc and $\frac{-\theta}{2} > 0$, the function $g_t = \frac{-\theta}{2}V_{t+1}^\theta \circ f_t$ is lsc because multiplying \eqref{466} by a positive constant preserves the direction of the inequality. 

By Proposition \ref{prop1}, it holds that $\phi_t = \exp \circ g_t$ \eqref{myphi} is lsc because $\exp : \mathbb{R} \rightarrow (0,+\infty)$ is continuous and increasing and $g_t: \mathcal{S} \rightarrow \mathbb{R}$ \eqref{Myg} is lsc and bounded. In addition, $\phi_t$ is bounded as a consequence of $g_t$ being bounded.

It follows that $\phi_t'$ \eqref{myphitprime} is bounded and lsc. The latter property holds in particular because $\phi_t$ \eqref{myphi} is lsc and $p_t(\mathrm{d}w|x,u)$ is a continuous stochastic kernel (see Lemma \ref{phiprimeislsc}, Appendix). 

We use Proposition \ref{prop1} again to conclude that $\log \circ \; \phi_t' : S \times A \rightarrow \mathbb{R}$ is lsc. To apply Proposition \ref{prop1}, we choose $\mathcal{M} = S \times A$, $\mathcal{Y}_1 = (0,+\infty)$, $\mathcal{Y}_2 = \mathbb{R}$, $\kappa_1 = \log$, $\kappa_2 = \phi_t'$, and $[\underline{c},\overline{c}] = [e^{\frac{-\theta}{2}\underline{b}},e^{\frac{-\theta}{2}\overline{b}}] \subset \mathcal{Y}_1$, where $\underline{b}$ is a lower bound and $\overline{b}$ is an upper bound for $V_{t+1}^\theta$. 
Finally, since $\log \circ \; \phi_t'$ is lsc and $\frac{-2}{\theta} > 0$, we conclude that $\psi_t^\theta = \frac{-2}{\theta} \log \circ \; \phi_t'$ is lsc.
\end{proof}

By proving Theorem \ref{lscremark}, we guarantee that the functions  $V_0^\theta, V_1^\theta, \dots, V_N^\theta$ satisfy properties which facilitate the optimality result (Theorem \ref{optimality}) in the following section.
%
%
\section{Existence of an Optimal Risk-Averse Policy}
In this section, first we prove a DP recursion for the risk-averse control problem; the sum-to-product property of the exponential function is particularly useful for this proof (Lemma \ref{dynprogremark}). Then, we use Lemma \ref{dynprogremark} and Theorem \ref{lscremark} to show the equality $V_{0}^\theta = V_{\theta}^*$ and to construct an optimal risk-averse policy (Theorem \ref{optimality}). 

Define the random cost-to-go $Z_t$ for time $t \in \mathbb{T}_N$ as follows: for all $\omega = (x_0,u_0,\dots,x_{N-1},u_{N-1},x_N) \in \Omega$, 
\begin{equation}\label{myZt}
  \textstyle  Z_t(\omega) := \begin{cases} c_N(x_N) + \sum_{i=t}^{N-1} c_i(x_i,u_i) &  \text{if } t \in \mathbb{T} \\ c_N(x_N) & \text{if } t = N\end{cases}.
\end{equation}
Note that $Z_t(\omega) = c_t(x_t,u_t) + Z_{t+1}(\omega)$ for any $t \in \mathbb{T}$ and $\omega \in \Omega$ of the form specified above, and $Z_0 = Z$ \eqref{myZ}. While $Z_t$ is a random variable on $\Omega$, the right-hand-side of \eqref{myZt} does not depend on the trajectory prior to time $t$. This is useful for deriving a recursion that is history-dependent only through the current state. For $t \in \mathbb{T}_N$, $x_t \in S$, $\theta \in \Theta$, and $\pi \in \Pi$, we denote a conditional expectation of $e^{\frac{-\theta}{2} Z_t}$ by
  $ W_t^{\pi,\theta}(x_t) = E^\pi(e^{\frac{-\theta}{2} Z_t}| X_t = x_t )$. We define an induced probability measure $P_{X_t}^\pi(B) := P_x^\pi\big(\{X_t \in B \} \big)$ for $B \in \mathcal{B}_{S}$.
We use the following abbreviations: a.e. = almost every and w.r.t. = with respect to.

\begin{lemma}[DP recursion]\label{dynprogremark}
Let $\theta \in \Theta$ and $\pi = (\mu_0,\mu_1,\dots,\mu_{N-1}) \in \Pi$ be given. Under Assumption \ref{measselect}, it holds that $W_t^{\pi,\theta}(x_t)\in (0,+\infty)$ and
\begin{equation*}\begin{aligned}
& W_t^{\pi,\theta}(x_t) =\\
 &  \textstyle e^{\frac{-\theta}{2}c_t(x_t,\mu_t(x_t))}  \int_{D} W_{t+1}^{\pi,\theta}(f_t(x_t,\mu_t(x_t),w)) \; p_t(\mathrm{d}w|x_t,\mu_t(x_t))
\end{aligned}\end{equation*}
for almost every $x_t \in S$ with respect to $P_{X_t}^\pi$ and for all $t \in \mathbb{T}$. 
\end{lemma}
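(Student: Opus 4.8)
The plan is to derive both assertions from three ingredients: the boundedness of the stage costs, the sum-to-product identity for the exponential applied to the one-step splitting $Z_t = c_t(X_t,U_t) + Z_{t+1}$ recorded after \eqref{myZt}, and the Markov structure of $P_x^\pi$ encoded in \eqref{pxpi}. I would first dispatch the claim $W_t^{\pi,\theta}(x_t) \in (0,+\infty)$. Since each $c_i$ is bounded by Assumption \ref{measselect}, $Z_t$ in \eqref{myZt} is a bounded random variable, say $\underline{z} \le Z_t \le \overline{z}$; because $\frac{-\theta}{2} > 0$ and the exponential is increasing, the transformed variable satisfies $0 < e^{\frac{-\theta}{2}\underline{z}} \le e^{\frac{-\theta}{2}Z_t} \le e^{\frac{-\theta}{2}\overline{z}} < +\infty$. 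A version of the conditional expectation then inherits these two positive bounds for a.e.\ $x_t$ w.r.t.\ $P_{X_t}^\pi$, giving $W_t^{\pi,\theta}(x_t) \in (0,+\infty)$.

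Next I would establish the recursion for a fixed $t \in \mathbb{T}$. Writing $e^{\frac{-\theta}{2}Z_t} = e^{\frac{-\theta}{2}c_t(X_t,U_t)}\, e^{\frac{-\theta}{2}Z_{t+1}}$ and using that $U_t = \mu_t(X_t)$ holds $P_x^\pi$-a.s.\ for a deterministic Markov policy, the first factor equals the constant $e^{\frac{-\theta}{2}c_t(x_t,\mu_t(x_t))}$ on the event $\{X_t = x_t\}$ and can be pulled outside the conditional expectation, so that $W_t^{\pi,\theta}(x_t) = e^{\frac{-\theta}{2}c_t(x_t,\mu_t(x_t))}\, E^\pi\big(e^{\frac{-\theta}{2}Z_{t+1}}\,\big|\,X_t = x_t\big)$ for a.e.\ $x_t$. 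I would then expand the remaining conditional expectation by the tower property, conditioning additionally on $X_{t+1}$. Because $Z_{t+1}$ depends on $\omega$ only through the coordinates from time $t+1$ onward (as noted after \eqref{myZt}) and, by the factored form \eqref{pxpi}, the conditional law of the post-$(t+1)$ trajectory given $(X_0,\dots,X_{t+1})$ depends only on $X_{t+1}$, the inner conditional expectation reduces to $W_{t+1}^{\pi,\theta}(X_{t+1})$. Since the conditional law of $X_{t+1}$ given $X_t = x_t$ is $q_t(\cdot\,|\,x_t,\mu_t(x_t))$, I obtain $E^\pi\big(e^{\frac{-\theta}{2}Z_{t+1}}\,\big|\,X_t = x_t\big) = \int_S W_{t+1}^{\pi,\theta}(y)\, q_t(\mathrm{d}y\,|\,x_t,\mu_t(x_t))$, and the definition \eqref{qt} of $q_t$ as the pushforward of $p_t$ under $f_t(x_t,\mu_t(x_t),\cdot)$ rewrites this as the integral over $D$ against $p_t(\mathrm{d}w\,|\,x_t,\mu_t(x_t))$ in the statement, completing the recursion.

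The hard part will be the rigorous justification of the Markov/tower step rather than the algebra. Conditional expectations are defined only up to $P_{X_t}^\pi$-null sets, which is exactly why the identity is asserted for a.e.\ $x_t$; to legitimately replace $E^\pi\big(e^{\frac{-\theta}{2}Z_{t+1}}\,\big|\,X_{t+1},X_t\big)$ by $W_{t+1}^{\pi,\theta}(X_{t+1})$ I would verify the Markov property directly from the nested-integral representation \eqref{pxpi}, showing that integrating a bounded measurable function of $(X_{t+1},\dots,X_N)$ yields the same value whether one conditions on the entire history through time $t+1$ or on $X_{t+1}$ alone. Boundedness and measurability of $W_{t+1}^{\pi,\theta}$ (a bounded measurable function, being the conditional expectation of the bounded measurable $e^{\frac{-\theta}{2}Z_{t+1}}$) keep every integral finite and make the change of variables via \eqref{qt} valid, and the a.e.\ qualifier simply accumulates from the non-uniqueness of the conditional-expectation versions used at each step.
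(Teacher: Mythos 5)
Your overall route is sound and, at its core, parallels the paper's: the paper likewise obtains explicit nested-integral versions of $W_t^{\pi,\theta}$ and $W_{t+1}^{\pi,\theta}$ from the definition of conditional expectation \cite[Th. 6.3.3]{ash1972} applied to the product structure \eqref{pxpi}, pulls the factor $e^{\frac{-\theta}{2}c_t(x_t,u_t)}$ outside the inner integrals, recognizes the expression for $W_{t+1}^{\pi,\theta}$ inside what remains \eqref{90}, and converts $q_t$ to $p_t$ via \eqref{qt}. Your ``tower property plus Markov property'' language is an abstract repackaging of exactly that computation (and you correctly note it must be verified from \eqref{pxpi} rather than invoked), and your positivity argument matches the paper's.

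There is, however, one genuine gap, and it sits precisely where you write that ``the a.e.\ qualifier simply accumulates from the non-uniqueness of the conditional-expectation versions.'' The difficulty is not an accumulation of null sets within one measure; it is a transfer of null sets between \emph{different} measures. The function $W_{t+1}^{\pi,\theta}$ is pinned down only up to $P_{X_{t+1}}^\pi$-null sets, yet in the recursion it is integrated against the kernel $q_t(\cdot|x_t,\mu_t(x_t))$ for each fixed $x_t$; a $P_{X_{t+1}}^\pi$-null set need not be $q_t(\cdot|x_t,\mu_t(x_t))$-null for a given $x_t$, so replacing the canonical nested-integral version (the function $\varphi_{t+1}^{\pi,\theta}$ of \eqref{myvarphi}) by an arbitrary version $W_{t+1}^{\pi,\theta}$ inside that integral is not automatic. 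This is exactly the point to which the paper devotes its post-proof discussion: writing $q_t^{\mu_t}(\cdot|x_t) := q_t(\cdot|x_t,\mu_t(x_t))$ and using the disintegration $P_{X_{t+1}}^\pi(\underline{S}) = \int_S q_t^{\mu_t}(\underline{S}|x_t)\,\mathrm{d}P_{X_t}^\pi(x_t)$ in \eqref{30}, the set $S_{t+1}^{\pi,\theta}$ on which the two versions differ \eqref{311} satisfies $\int_S q_t^{\mu_t}(S_{t+1}^{\pi,\theta}|x_t)\,\mathrm{d}P_{X_t}^\pi(x_t) = 0$ \eqref{31}; hence $q_t^{\mu_t}(S_{t+1}^{\pi,\theta}|x_t) = 0$ for $P_{X_t}^\pi$-a.e.\ $x_t$ by \cite[Th. 1.6.6 (b)]{ash1972}, and only then do the two integrals coincide for a.e.\ $x_t$ by \cite[Th. 1.6.5 (b)]{ash1972}, which is \eqref{my2929}. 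Your write-up needs this disintegration argument (or an equivalent); as stated, the substitution step --- the heart of the recursion --- would not go through.
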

\begin{proof}
To derive the form of $W_t^{\pi,\theta}$, one applies the definition of conditional expectation \cite[Th. 6.3.3, p. 245]{ash1972}. It follows that the function $W_t^{\pi,\theta} : S \rightarrow \mathbb{R}^*$ is Borel measurable and $W_t^{\pi,\theta}(x_t)$ equals 
\begin{equation}\label{85}\begin{aligned}
&      \textstyle  \int_A \int_{S} \int_A \cdots \int_{A} \int_{S} e^{\frac{-\theta}{2} (c_N(x_N) + \sum_{i=t}^{N-1} c_i(x_i,u_i))}\\ 
  & \hphantom{=} \;\; q_{N-1}(\mathrm{d}x_{N}|x_{N-1},u_{N-1}) \; \delta_{\mu_{N-1}(x_{N-1})}(\mathrm{d}u_{N-1})  \cdots \\ & \hphantom{=} \;\; \delta_{\mu_{t+1}(x_{t+1})}(\mathrm{d}u_{t+1}) \; q_{t}(\mathrm{d}x_{t+1}|x_t,u_t) \; \delta_{\mu_t(x_t)}(\mathrm{d}u_t) \end{aligned}
\end{equation}
for a.e. $x_t \in S$ w.r.t. $P_{X_t}^\pi$ and for all $t \in \mathbb{T}$.
In addition, the function $W_N^{\pi,\theta} : S \rightarrow \mathbb{R}^*$ is Borel measurable and satisfies
     $W_N^{\pi,\theta}(x_N)  =  e^{\frac{-\theta}{2} c_N(x_N)}$ for a.e. $x_N \in S$ w.r.t. $P_{X_N}^\pi$.
Since $c_t$ is bounded for all $t \in \mathbb{T}_N$ and the exponential is positive, we have that $W_t^{\pi,\theta}(x_t) \in (0,+\infty)$ for a.e. $x_t \in S$ w.r.t. $P_{X_t}^\pi$ and for all $t \in \mathbb{T}_N$. Details about applying \cite[Th. 6.3.3]{ash1972} are provided in a footnote.\footnote{To apply \cite[Th. 6.3.3]{ash1972}, note that $e^{\frac{-\theta}{2} Z_t}$ is a random variable on $(\Omega,\mathcal{B}_{\Omega},P_x^\pi)$ for any $x \in S$ and $\pi \in \Pi$. $X_t$ is a random object. 
The expectation $E_x^\pi(e^{\frac{-\theta}{2} Z_t}) := \int_{\Omega} e^{\frac{-\theta}{2} Z_t(\omega)} \mathrm{d}P_x^\pi(\omega)$ exists because $e^{\frac{-\theta}{2} Z_t(\omega)} \geq 0$ for all $\omega \in \Omega$. }

Let $t \in \{0,1,\dots,N-2\}$ be given. 
Since $e^{\frac{-\theta}{2}c_t(x_t,u_t)}$ does not depend on the trajectory after time $t$, it can be placed ``outside'' several integrals so that \eqref{85} becomes
%
 \begin{equation}\label{89}\begin{aligned} 
    & \textstyle  W_t^{\pi,\theta}(x_t)  \\
&    =   \textstyle  \int_A e^{\frac{-\theta}{2}c_t(x_t,u_t)} \int_{S} \int_A \cdots \int_{A} \int_{S} e^{\frac{-\theta}{2} (c_N(x_N) + \sum_{i=t+1}^{N-1} c_i(x_i,u_i))}\\ 
  & \hphantom{=} \; q_{N-1}(\mathrm{d}x_{N}|x_{N-1},u_{N-1}) \; \delta_{\mu_{N-1}(x_{N-1})}(\mathrm{d}u_{N-1})  \cdots \\ & \hphantom{=} \; \delta_{\mu_{t+1}(x_{t+1})}(\mathrm{d}u_{t+1}) \; q_{t}(\mathrm{d}x_{t+1}|x_t,u_t) \; \delta_{\mu_{t}(x_{t})}(\mathrm{d}u_{t}) 
\end{aligned}
\end{equation}
%
for a.e. $x_t \in S$ w.r.t. $P_{X_t}^\pi$. Since $t+1 \in \mathbb{T}$, by applying \eqref{85}, we have that
\begin{equation}\label{90}\begin{aligned}
    & \textstyle  W_{t+1}^{\pi,\theta}(x_{t+1}) \\ & =  \textstyle  \int_A \cdots \int_{A} \int_{S} e^{\frac{-\theta}{2} (c_N(x_N) + \sum_{i=t+1}^{N-1} c_i(x_i,u_i))}\\ 
  & \hphantom{==} q_{N-1}(\mathrm{d}x_{N}|x_{N-1},u_{N-1}) \; \delta_{\mu_{N-1}(x_{N-1})} (\mathrm{d}u_{N-1}) \\ & \hphantom{==} \cdots \delta_{\mu_{t+1}(x_{t+1})} (\mathrm{d}u_{t+1}) 
\end{aligned}\end{equation}
for a.e. $x_{t+1} \in S$ w.r.t. $P_{X_{t+1}}^\pi$. 
The expression for $W_{t+1}^{\pi,\theta}(x_{t+1})$ \eqref{90} appears in \eqref{89}, from which we conclude that
%
\begin{equation}\label{114}\begin{aligned}
 & W_t^{\pi,\theta}(x_t)   = \\ & \textstyle  \int_A e^{\frac{-\theta}{2}c_t(x_t,u_t)} 
  \int_{S}  W_{t+1}^{\pi,\theta}(x_{t+1}) \; q_{t}(\mathrm{d}x_{t+1}|x_t,u_t) \; \delta_{\mu_t(x_t)}(\mathrm{d}u_t)
\end{aligned}\end{equation}
for a.e. $x_t \in S$ w.r.t. $P_{X_t}^\pi$.
By using the definition of the Dirac measure $\delta_{\mu_{t}(x_{t})}$ and the definition of $q_t$ \eqref{qt}, we complete the derivation of the recursion for $t \in \{0,1,\dots,N-2\}$. The derivation for $t = N-1$ is analogous.
\end{proof}
We would like to provide further explanation for \eqref{114}, as there are subtle arguments involving the almost-everywhere notions. Let us define
\begin{equation}\label{myvarphi}
\begin{aligned}
    & \varphi_{t+1}^{\pi,\theta}(x_{t+1}) \\ & \coloneqq \textstyle  \int_A \cdots \int_{A} \int_{S} e^{\frac{-\theta}{2} (c_N(x_N) + \sum_{i=t+1}^{N-1} c_i(x_i,u_i))}\\ 
  & \hphantom{==} q_{N-1}(\mathrm{d}x_{N}|x_{N-1},u_{N-1}) \; \delta_{\mu_{N-1}(x_{N-1})} (\mathrm{d}u_{N-1}) \\ & \hphantom{==} \cdots \delta_{\mu_{t+1}(x_{t+1})} (\mathrm{d}u_{t+1})
\end{aligned}
\end{equation}
for all $x_{t+1} \in S$. That is, $\varphi_{t+1}^{\pi,\theta}(x_{t+1}) = W_{t+1}^{\pi,\theta}(x_{t+1})$ for a.e. $x_{t+1} \in S$ w.r.t. $P_{X_{t+1}}^\pi$. From \eqref{89} and \eqref{myvarphi},
\begin{equation}
    \begin{aligned}
     &    \textstyle  W_t^{\pi,\theta}(x_t) \\  
  &  =   \textstyle e^{\frac{-\theta}{2}c_t(x_t,\mu_{t}(x_{t}))} \int_{S}  \varphi_{t+1}^{\pi,\theta}(x_{t+1}) \; q_{t}(\mathrm{d}x_{t+1}|x_t,\mu_{t}(x_{t}))
    \end{aligned}
\end{equation}
for a.e. $x_t \in S$ w.r.t. $P_{X_{t}}^\pi$. Denote $q_{t}^{\mu_t}(\cdot|x_t) \coloneqq q_{t}(\cdot|x_t,\mu_{t}(x_{t}))$ for brevity. To guarantee \eqref{114}, we need
\begin{equation}\label{my2929}
   \textstyle \int_{S}  \varphi_{t+1}^{\pi,\theta} \; \mathrm{d}q_{t}^{\mu_t}(\cdot|x_t) = \int_{S}  W_{t+1}^{\pi,\theta} \; \mathrm{d}q_{t}^{\mu_t}(\cdot|x_t)
\end{equation}
for a.e. $x_t \in S$ w.r.t. $P_{X_{t}}^\pi$. Now, for every $\underline{S} \in \mathcal{B}_{S}$,
\begin{equation}\label{30}
\begin{aligned}
    P_{X_{t+1}}^\pi(\underline{S}) & = \textstyle \int_S \int_A  q_t(\underline{S}|x_t,u_t) \; \delta_{\mu_t(x_t)}(\mathrm{d}u_t) \; \mathrm{d}P_{X_t}^\pi(x_t) \\
    & = \textstyle \int_S q_t(\underline{S}|x_t,\mu_t(x_t))  \; \mathrm{d}P_{X_t}^\pi(x_t) \\
    & = \textstyle \int_S q_{t}^{\mu_t}(\underline{S}|x_t)  \; \mathrm{d}P_{X_t}^\pi(x_t).
    \end{aligned}
\end{equation}
For brevity, let us denote
\begin{equation}\label{311}
    S_{t+1}^{\pi,\theta} \coloneqq \left\{z \in S : \varphi_{t+1}^{\pi,\theta}(z) \neq W_{t+1}^{\pi,\theta}(z) \right\}.
\end{equation}
Since $\varphi_{t+1}^{\pi,\theta} = W_{t+1}^{\pi,\theta}$ a.e. w.r.t. $P_{X_{t+1}}^\pi$ and by \eqref{30}--\eqref{311},
\begin{equation}\label{31}
    P_{X_{t+1}}^\pi(S_{t+1}^{\pi,\theta}) = \textstyle \int_S q_{t}^{\mu_t}(S_{t+1}^{\pi,\theta}|x_t)  \; \mathrm{d}P_{X_t}^\pi(x_t) = 0.
\end{equation}
Since $q_{t}^{\mu_t}(S_{t+1}^{\pi,\theta}|\cdot)$ is a non-negative Borel-measurable function on $S$ and \eqref{31} holds, it follows that $q_{t}^{\mu_t}(S_{t+1}^{\pi,\theta}|x_t) = 0$ for a.e. $x_t \in S$ w.r.t. $P_{X_t}^\pi$ by \cite[Th. 1.6.6 (b)]{ash1972}. 
The last statement is equivalent to
\begin{equation}
    q_{t}^{\mu_t}\left(\left\{z \in S : \varphi_{t+1}^{\pi,\theta}(z) = W_{t+1}^{\pi,\theta}(z) \right\} \Big|x_t \right) = 1
\end{equation}
for a.e. $x_t \in S$ w.r.t. $P_{X_t}^\pi$. That is, $\varphi_{t+1}^{\pi,\theta} = W_{t+1}^{\pi,\theta}$ almost everywhere with respect to $q_{t}^{\mu_t}(\cdot|x_t)$ for almost every $x_t \in S$ with respect to $P_{X_t}^\pi$. Finally, by another classic integration theorem \cite[Th. 1.6.5 (b)]{ash1972}, 
we conclude that \eqref{my2929} holds for almost every $x_t \in S$ with respect to $P_{X_t}^\pi$.

The last result proves optimality.
\begin{theorem}[Optimality of $V_{0}^\theta$ and $\pi_\theta^*$]\label{optimality}
Under Assumption \ref{measselect}, 
it holds that
    $V_{0}^\theta(x) =  V_\theta^*(x) = \textstyle\frac{-2}{\theta}\log E_x^{\pi_\theta^*}\big(e^{\frac{-\theta}{2} Z}\big)$ for all $x \in S$, where $\pi_\theta^* := (\mu_0^\theta,\mu_1^\theta, \dots,\mu_{N-1}^\theta)$ is non-unique and is given by Theorem \ref{lscremark}.
\end{theorem}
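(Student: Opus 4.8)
The plan is to pass to the ``product form'' of the problem by composing with the exponential, which converts both the dynamic programming recursion of Algorithm \ref{valalgwhittle} and the conditional-expectation recursion of Lemma \ref{dynprogremark} into the same multiplicative shape, and then to run two backward inductions over $t \in \mathbb{T}_N$. Concretely, I would first record how Algorithm \ref{valalgwhittle} transforms under the substitution $h_t^\theta := e^{\frac{-\theta}{2} V_t^\theta}$. Because $\frac{-\theta}{2} > 0$, the map $s \mapsto e^{\frac{-\theta}{2} s}$ is continuous and increasing, so it commutes with the infimum in \eqref{10b}, and the identity $e^{\frac{-\theta}{2} \cdot \frac{-2}{\theta} \log(\cdot)} = (\cdot)$ cancels the outer logarithm in $\psi_t^\theta$ \eqref{mypsi}. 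This yields, for every $x \in S$,
\begin{equation*}
 h_t^\theta(x) = \inf_{u \in A} e^{\frac{-\theta}{2} c_t(x,u)} \textstyle \int_D h_{t+1}^\theta(f_t(x,u,w)) \; p_t(\mathrm{d}w|x,u),
\end{equation*}
with $h_N^\theta = e^{\frac{-\theta}{2} c_N}$. Moreover, by the attained-infimum property \eqref{existinf} of Theorem \ref{lscremark}, the selector $\mu_t^\theta$ realizes this infimum at $u = \mu_t^\theta(x)$ for all $x \in S$. I note that the right-hand side above is exactly the recursion satisfied by $W_t^{\pi,\theta}$ in Lemma \ref{dynprogremark}, but with a free minimization over $u$ in place of the fixed action $\mu_t(x)$; this correspondence is the crux of the argument.

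For optimality of $\pi_\theta^*$, I would show by backward induction that $W_t^{\pi_\theta^*,\theta} = h_t^\theta$ almost everywhere with respect to $P_{X_t}^{\pi_\theta^*}$. The base case $t = N$ is immediate since $W_N^{\pi_\theta^*,\theta} = e^{\frac{-\theta}{2} c_N} = h_N^\theta$. For the inductive step, I substitute the hypothesis $W_{t+1}^{\pi_\theta^*,\theta} = h_{t+1}^\theta$ into the recursion of Lemma \ref{dynprogremark} evaluated at $\mu_t^\theta$, and then invoke the transformed recursion together with \eqref{existinf} to collapse the result to $h_t^\theta(x)$. For optimality of the value, I would show analogously that $W_t^{\pi,\theta} \geq h_t^\theta$ almost everywhere with respect to $P_{X_t}^\pi$ for every $\pi \in \Pi$; here monotonicity of the integral against the kernel $q_t^{\mu_t}(\cdot|x_t)$ propagates the inequality, and the infimum characterization of $h_t^\theta$ supplies $e^{\frac{-\theta}{2} c_t(x,\mu_t(x))} \int_D h_{t+1}^\theta(f_t(x,\mu_t(x),w)) \, p_t(\mathrm{d}w|x,\mu_t(x)) \geq h_t^\theta(x)$.

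Both inductions descend to $t = 0$, where $P_{X_0}^\pi = \delta_x$ concentrates at the arbitrary initial condition $x$, so the almost-everywhere statements become genuine pointwise identities at $x$: $W_0^{\pi_\theta^*,\theta}(x) = h_0^\theta(x)$ and $W_0^{\pi,\theta}(x) \geq h_0^\theta(x)$ for all $\pi$. Since $V_0^\theta$ is bounded by Theorem \ref{lscremark} and $W_0^{\pi,\theta}(x) \in (0,+\infty)$ by Lemma \ref{dynprogremark}, the logarithms below are finite; recalling $W_0^{\pi,\theta}(x) = E_x^\pi(e^{\frac{-\theta}{2} Z})$ and applying $\frac{-2}{\theta} \log(\cdot)$ (increasing, as $\frac{-2}{\theta} > 0$), which returns $V_0^\theta$ from $h_0^\theta$, I obtain $\frac{-2}{\theta} \log E_x^{\pi}(e^{\frac{-\theta}{2} Z}) \geq V_0^\theta(x)$ for all $\pi$, with equality at $\pi_\theta^*$. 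Taking the infimum over $\Pi$ in \eqref{vthetastar} then gives $V_\theta^*(x) = V_0^\theta(x) = \frac{-2}{\theta} \log E_x^{\pi_\theta^*}(e^{\frac{-\theta}{2} Z})$ for all $x \in S$.

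I expect the main obstacle to be the careful tracking of the almost-everywhere qualifiers through the backward recursion, since each $W_t^{\pi,\theta}$ is only defined up to a $P_{X_t}^\pi$-null set and Lemma \ref{dynprogremark} holds only almost everywhere. Propagating an almost-everywhere relation at time $t+1$ inside the integral against $q_t^{\mu_t}(\cdot|x_t)$ at time $t$ requires exactly the null-set argument already carried out in the discussion following Lemma \ref{dynprogremark}, which rests on the consistency identity \eqref{30} linking $P_{X_{t+1}}^\pi$ to $P_{X_t}^\pi$ through the kernel; I would reuse that argument for both the equality of Claim one and the inequality of Claim two. The payoff is that no accumulation of null sets occurs along the recursion, and the degeneracy $P_{X_0}^\pi = \delta_x$ at its base is precisely what upgrades the conclusion from almost every $x$ to every $x \in S$.
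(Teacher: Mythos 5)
Your proposal is correct and follows essentially the same route as the paper's proof: the same decomposition into two backward inductions (the inequality $\frac{-2}{\theta}\log W_t^{\pi,\theta} \geq V_t^\theta$ for every $\pi \in \Pi$ and the equality for $\pi_\theta^*$ via \eqref{existinf}), the same reliance on Lemma \ref{dynprogremark} and Theorem \ref{lscremark}, the same null-set propagation argument from the discussion following Lemma \ref{dynprogremark}, and the same degeneracy of $P_{X_0}^\pi$ at $x$ to upgrade almost-everywhere statements to pointwise ones. The only cosmetic difference is that you carry the substitution $h_t^\theta = e^{\frac{-\theta}{2}V_t^\theta}$ through the entire induction, whereas the paper converts its induction hypothesis into the equivalent exponential form \eqref{94c} within each step and returns to the log domain at the end; the two bookkeeping choices are interchangeable.
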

\begin{proof}
Let $x \in S$ be given. Note that
     $W_0^{\pi,\theta}(x)  = E_x^\pi(e^{\frac{-\theta}{2} Z})$ for all $\pi \in \Pi$,
and recall that $Z = Z_0$. It suffices to show the following two statements:
\begin{align}\label{toshow84}
  \textstyle \frac{-2}{\theta} \log W_t^{\pi,\theta}(x_t) & \geq V_{t}^\theta(x_t)
\end{align}
for a.e. $x_t \in S$ w.r.t. $P_{X_t}^\pi$, for all $\pi \in \Pi$, and for all $t \in \mathbb{T}_N$, and
\begin{equation}\label{toshownext}
      \textstyle \frac{-2}{\theta} \log W_t^{\pi_\theta^*,\theta}(x_t) = V_{t}^\theta(x_t)
\end{equation}
for a.e. $x_t \in S$ w.r.t. $P_{X_t}^{\pi_\theta^*}$ and for all $t \in \mathbb{T}_N$. (Consider $t=0$, use $P_{X_0}^\pi = P_x^\pi \circ X_{0}^{-1}$ in the change-of-measure theorem \cite[Th. 1.6.12, p. 50]{ash1972}, and note that the realizations of $X_0$ are concentrated at $x$.) 

We proceed by induction. For the base case, we have
   $ \frac{-2}{\theta} \log W_N^{\pi,\theta}(x_N)  = \frac{-2}{\theta} \log \big(e^{\frac{-\theta}{2} c_N(x_N)}\big)
   =  V_{N}^\theta(x_N)$
for a.e. $x_N \in S$ w.r.t. $P_{X_N}^\pi$ and for all $\pi \in \Pi$. 

Now, assume the induction hypothesis for \eqref{toshow84}: for some $t \in \mathbb{T}$, it holds that
$\textstyle \frac{-2}{\theta} \log W_{t+1}^{\pi,\theta}(x_{t+1}) \geq V_{t+1}^\theta(x_{t+1})$ 
for a.e. $x_{t+1} \in S$ w.r.t. $P_{X_{t+1}}^\pi$ and for all $\pi \in \Pi$. 
Since $\frac{-\theta}{2} > 0$, the exponential is increasing, and $e^{\log a} = a$ for all $a \in (0,+\infty)$, the induction hypothesis is equivalent to
\begin{align}
     \textstyle W_{t+1}^{\pi,\theta}(x_{t+1}) & \geq e^{\frac{-\theta}{2} V_{t+1}^\theta(x_{t+1})} \label{94c}
\end{align}
for a.e. $x_{t+1} \in S$ w.r.t. $P_{X_{t+1}}^\pi$ and for all $\pi \in \Pi$. Let $\pi = (\mu_0,\mu_1,\dots,\mu_{N-1}) \in \Pi$ be given. We use the recursion provided by Lemma \ref{dynprogremark}, the inequality in \eqref{94c}, $V_{t+1}^\theta$ being lsc and bounded below (Theorem \ref{lscremark}), and $W_{t+1}^{\pi,\theta}$ being Borel measurable 
to derive the inequality: $W_t^{\pi,\theta}(x_t) \geq$
\begin{equation}\label{mymy46}\begin{aligned}
 & \textstyle e^{\frac{-\theta}{2}c_t(x_t,\mu_t(x_t))}   \int_{D} e^{\frac{-\theta}{2} V_{t+1}^\theta(f_t(x_t,\mu_t(x_t),w))} \; p_t(\mathrm{d}w|x_t,\mu_t(x_t))
\end{aligned}\end{equation}
for a.e. $x_t \in S$ w.r.t. $P_{X_t}^\pi$. 
The right-hand-side of the inequality 
is a product of elements of $(0,+\infty)$ in particular because $V_{t+1}^\theta$ is bounded (Theorem \ref{lscremark}). Since $\log(a b) = \log a + \log b$ for all $a \in (0,+\infty)$ and $b \in (0,+\infty)$, and the natural logarithm is increasing, 
we derive the next inequality: $\log W_t^{\pi,\theta}(x_t) \geq$
\begin{equation}\begin{aligned}
    &  \textstyle \log e^{\frac{-\theta}{2}c_t(x_t,\mu_t(x_t))} \\ & \hphantom{\geq} \; + \textstyle \log \big(\int_{D} e^{\frac{-\theta}{2} V_{t+1}^\theta(f_t(x_t,\mu_t(x_t),w))} \; p_t(\mathrm{d}w|x_t,\mu_t(x_t))\big)
\end{aligned}\end{equation}
for a.e. $x_t \in S$ w.r.t. $P_{X_t}^\pi$.
By simplifying the first term in the sum and multiplying by $\frac{-2}{\theta}> 0$, it follows that $\frac{-2}{\theta} \log W_t^{\pi,\theta}(x_t) \geq v_{t+1}^\theta(x_t,\mu_t(x_t))$ for a.e. $x_t \in S$ w.r.t. $P_{X_t}^\pi$, where $v_{t+1}^\theta$ is defined by \eqref{mypsi}.
Moreover, since $v_{t+1}^\theta(x_t,\mu_t(x_t)) \geq \inf_{u \in A}v_{t+1}^\theta(x_t,u) = V_t^\theta(x_t)$ \eqref{10b} for every $x_t \in S$, we conclude that $\frac{-2}{\theta} \log W_t^{\pi,\theta}(x_t) \geq V_{t}^\theta(x_t)$ for a.e. $x_t \in S$ w.r.t. $P_{X_t}^\pi$. This proves the induction step for \eqref{toshow84}.

A similar induction argument proves that \eqref{toshownext} holds for a.e. $x_t \in S$ w.r.t. $P_{X_t}^{\pi_\theta^*}$ and for all $t \in \mathbb{T}_N$. In particular, this argument uses Theorem \ref{lscremark} to guarantee that \eqref{existinf} holds. 
\end{proof}

\section{Concluding Remarks}\label{secVI}
Here, we have studied a classical risk-averse control problem for an MDP with Borel state and control spaces on a discrete finite-time horizon, where risk is characterized using the exponential utility functional. While exponential-utility optimal control is well-understood in settings with linear-quadratic assumptions, countable state spaces, or infinite-time horizons, it is not well-understood outside of these settings from a basic analytical perspective. Using first principles from measure theory and real analysis, we have presented a more basic path to the solution in comparison to the existing literature. 
Topics for future work include investigating an alternative path using the Interchangeability Principle \cite{shapiro2009lectures} and extensions to partial state information and universally measurable policies, e.g., by building on techniques from \cite{yu2021average}. 


\section*{Appendix}
We state and prove Lemma \ref{phiprimeislsc} below.
\begin{lemma}[$\phi_t'$ is lsc and finite]\label{phiprimeislsc}
Recall that $\phi_t : S \times A \times D \rightarrow (0,+\infty)$ is lsc and bounded \eqref{myphi}, and $p_t(\mathrm{d}w_t|x_t,u_t)$ is a continuous stochastic kernel on $D$ given $S \times A$. It holds that $\phi_t' : S \times A \rightarrow (0,+\infty)$ \eqref{myphitprime} is lsc.
\end{lemma}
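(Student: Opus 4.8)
The plan is to establish lower semicontinuity by first reducing to the case of bounded \emph{continuous} integrands, for which integration against a weakly continuous kernel behaves well, and then recovering the general lsc integrand through a monotone approximation. First I would invoke the standard fact that a bounded lsc function on a metrizable space is the pointwise limit of an increasing sequence of bounded continuous functions. Applying this to $\phi_t$ on $S \times A \times D$, I obtain bounded continuous functions $r_k : S \times A \times D \rightarrow \mathbb{R}$ with $r_k \uparrow \phi_t$ pointwise. Since each $r_k(x,u,\cdot)$ is bounded and Borel measurable and $p_t(\mathrm{d}w|x,u)$ is a probability measure, the integrals $R_k(x,u) := \int_D r_k(x,u,w)\, p_t(\mathrm{d}w|x,u)$ are well defined and finite, and by the Monotone Convergence Theorem $R_k(x,u) \uparrow \phi_t'(x,u)$ for every $(x,u) \in S \times A$.

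The central step is to show that each $R_k$ is continuous on $S \times A$. Let $(x^i,u^i) \rightarrow (x,u)$. Because $p_t$ is a continuous stochastic kernel, $p_t(\cdot|x^i,u^i)$ converges weakly to $p_t(\cdot|x,u)$ in $\mathcal{P}(D)$; because $r_k$ is jointly continuous, $r_k(x^i,u^i,w^i) \rightarrow r_k(x,u,w)$ whenever $w^i \rightarrow w$, so the integrands converge continuously and are uniformly bounded. I would then write $R_k(x^i,u^i) - R_k(x,u)$ as the sum of a ``measure'' term $\int_D r_k(x,u,w)\,[\,p_t(\mathrm{d}w|x^i,u^i) - p_t(\mathrm{d}w|x,u)\,]$, which vanishes by weak convergence applied to the fixed bounded continuous function $r_k(x,u,\cdot)$, and an ``oscillation'' term $\int_D [\,r_k(x^i,u^i,w) - r_k(x,u,w)\,]\, p_t(\mathrm{d}w|x^i,u^i)$, which I would force to $0$. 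This yields continuity, hence lower semicontinuity, of each $R_k$.

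Finally, since $\phi_t' = \sup_k R_k$ is the pointwise supremum (equivalently, the increasing limit) of a sequence of lsc functions, and the supremum of any family of lsc functions is lsc, I conclude that $\phi_t'$ is lsc. Finiteness and positivity are immediate from $0 < \phi_t \leq \overline{c}$ and $p_t(\cdot|x,u)$ being a probability measure, so the content is entirely in the semicontinuity claim.

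The main obstacle is the oscillation term, where \emph{both} the measure and the integrand vary with $i$: joint continuity of $r_k$ does not by itself give uniform-in-$w$ control of $|r_k(x^i,u^i,w) - r_k(x,u,w)|$. I expect to resolve this using tightness of the weakly convergent family $\{p_t(\cdot|x^i,u^i)\}$ (automatic on a Borel space), choosing a compact $K \subseteq D$ carrying all but $\varepsilon$ of the mass uniformly in $i$, and exploiting uniform continuity of $r_k$ on the compact set $(\{(x^i,u^i)\}_{i} \cup \{(x,u)\}) \times K$ to bound the oscillation on $K$ while the boundedness of $r_k$ controls the contribution from $K^c$. Alternatively, the same conclusion follows from a generalized (continuous-convergence) weak-convergence argument, or by appealing directly to the semicontinuity result for integration against continuous stochastic kernels in \cite[Prop. 7.31]{bertsekas2004stochastic}; I would present the elementary tightness-based route to keep the argument self-contained and consistent with the paper's first-principles approach.
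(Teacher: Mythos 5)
Your proof has the same skeleton as the paper's (which in turn follows \cite[Prop.~7.31]{bertsekas2004stochastic}): approximate the bounded lsc integrand from below by an increasing sequence of bounded continuous functions, pass the approximation through the integral by monotone convergence, and conclude because a pointwise supremum of lsc functions is lsc. The measure-term/oscillation-term decomposition is also sound, and the measure term is handled correctly. The genuine gap is the parenthetical claim that uniform tightness of $\{p_t(\cdot|x^i,u^i)\}_i$ is ``automatic on a Borel space.'' That is true when $D$ is Polish: the weakly convergent sequence together with its limit is compact in $\mathcal{P}(D)$, and the converse half of Prokhorov's theorem (relative compactness $\Rightarrow$ uniform tightness) is valid on Polish spaces. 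But a Borel space here is an arbitrary Borel subset of a Polish space with its relative topology, and such a subspace is Polish only if it is $G_\delta$. On non-Polish Borel spaces the converse of Prokhorov's theorem can fail: Preiss (1973) exhibited on $D = \mathbb{Q}$ (an $F_\sigma$, hence Borel, subset of $\mathbb{R}$, so an admissible disturbance space in this paper) a weakly convergent sequence of probability measures that is not uniformly tight. Each individual $p_t(\cdot|x^i,u^i)$ is tight, but that gives no uniformity in $i$, which is exactly what your compact set $K$ needs. So the route you designate as primary fails at precisely the step it was introduced to handle, within the generality in which the lemma is stated.

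The repair is cheap and removes tightness from the picture entirely. Observe that $\delta_{(x^i,u^i)} \otimes p_t(\cdot|x^i,u^i) \rightarrow \delta_{(x,u)} \otimes p_t(\cdot|x,u)$ weakly in $\mathcal{P}\big((S \times A) \times D\big)$, because weak convergence of the factors implies weak convergence of the product measures on separable metrizable spaces (a standard fact requiring no completeness). Since $r_k$ is a single fixed bounded continuous function on $(S \times A) \times D$, it follows that $R_k(x^i,u^i) = \int r_k \, \mathrm{d}\big[\delta_{(x^i,u^i)} \otimes p_t(\cdot|x^i,u^i)\big] \rightarrow R_k(x,u)$; this disposes of your measure and oscillation terms in one stroke, and it is essentially the argument behind \cite[Prop.~7.30]{bertsekas2004stochastic}, the continuous-integrand continuity result on which Prop.~7.31 rests. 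Your two fallbacks are also viable: the continuous-convergence argument can be justified by the Skorokhod--Dudley representation theorem, which holds on separable metric spaces, while citing \cite[Prop.~7.31]{bertsekas2004stochastic} outright is exactly what the paper does, so that option reproduces the paper's proof rather than giving an independent one.
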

\begin{proof}
Showing that $\phi_t'$ is lsc is a special case of \cite[Prop. 7.31]{bertsekas2004stochastic}, which we call Corollary 1: \emph{Let $\mathcal{X}$ and $\mathcal{Y}$ be separable metrizable spaces, and let $q(\mathrm{d}y|x)$ be a continuous stochastic kernel on $\mathcal{Y}$ given $\mathcal{X}$. Suppose that $g : \mathcal{X} \times \mathcal{Y} \rightarrow \mathbb{R}$ is lsc, and there are scalars $\underline{c}$ and $\overline{c}$ such that $\underline{c} \leq g(x,y) \leq \overline{c}$ for all $(x,y) \in \mathcal{X} \times \mathcal{Y}$.
Then, the function $\lambda : \mathcal{X} \rightarrow \mathbb{R}$ defined by $\lambda(x) := \int_{\mathcal{Y}} g(x,y) \; q(\mathrm{d}y|x)$ is lsc.} 

To apply Corollary 1, choose $\mathcal{X} = S \times A$, $\mathcal{Y} = D$, $g = \phi_t$, $\lambda = \phi_t'$, $\underline{c} = e^{\frac{-\theta}{2}\underline{b}}$, $\overline{c} = e^{\frac{-\theta}{2}\overline{b}}$, where $\underline{b}$ is a lower bound and $\overline{b}$ is an upper bound for $V_{t+1}^\theta$, and $q(\mathrm{d}y|x) = p_t(\mathrm{d}w|x,u)$. 

To prove Corollary 1, one uses the fact that $g : \mathcal{X} \times \mathcal{Y} \rightarrow \mathbb{R}$ being lsc and bounded ($\underline{c} \leq g \leq \overline{c}$), where $\mathcal{X} \times \mathcal{Y}$ is a metrizable space, implies that there is a sequence of continuous functions $g_m : \mathcal{X} \times \mathcal{Y} \rightarrow \mathbb{R}$ such that $\underline{c} \leq g_m \leq g_{m+1} \leq g \leq \overline{c}$ for all $m \in \mathbb{N}$, and $\{g_m\}_{m=1}^\infty$ converges to $g$ pointwise. The proof of this fact uses techniques from \cite[Lemmas 7.7 \& 7.14]{bertsekas2004stochastic} and \cite[Theorem A6.6]{ash1972}. We outline some key steps below for clarity.

One may choose $g_m(z) := \inf\{g(s) + m \rho(z,s) : s = (s_1,s_2), s_1 \in \mathcal{X}, s_2 \in \mathcal{Y} \}$, where $z = (z_1,z_2)$, $z_1 \in \mathcal{X}$, $z_2 \in \mathcal{Y}$, $m \in \mathbb{N}$, and $\rho$ is an appropriate metric on $\mathcal{X} \times \mathcal{Y}$. Now, let $z = (z_1,z_2) \in \mathcal{X} \times \mathcal{Y}$ and $\epsilon > 0$ be given. For each $m \in \mathbb{N}$, $g_m(z)$ is finite, and hence, there is a point $z_m = (z_{1m},z_{2m}) \in \mathcal{X} \times \mathcal{Y}$ such that 
\begin{equation}
    g(z_m) + m \rho(z, z_m) \leq g_m(z) + \epsilon.
\end{equation}
Since $\underline{c} \leq g(z_m)$ and $g_m(z) \leq g(z)$ for all $m \in \mathbb{N}$, we have
\begin{equation}\label{my24}
    \underline{c} + m \rho(z, z_m) \leq g_m(z) + \epsilon \leq g(z) + \epsilon \;\;\; \forall m \in \mathbb{N}.
\end{equation}
Since $\underline{c}$ is finite, $m$ is finite and positive, $\rho$ is bounded below by zero, and from \eqref{my24}, it follows that
\begin{equation}\label{my34}
    0 \leq \rho(z, z_m) \leq \frac{g(z) + \epsilon - \underline{c}}{m} \;\;\;\;\forall m \in \mathbb{N}.
\end{equation}
The inequality \eqref{my34} and $g(z)$ being finite imply that
\begin{equation}
    \liminf_{m \rightarrow \infty} \rho(z, z_m) = \limsup_{m \rightarrow \infty} \rho(z, z_m) = 0,
\end{equation}
which shows that the limit of $\{\rho(z, z_m)\}_{m=1}^\infty$ exists and equals zero. Then, $g$ being lsc and $\rho(z, z_m) \rightarrow 0$ implies that $g(z) \leq \underset{m \rightarrow \infty}{\liminf} \; g(z_m)$.

Moreover, the proof of Corollary 1 uses the Extended Monotone Convergence Theorem \cite[p. 47]{ash1972}, which applies in particular because $g_m \geq \underline{c}$ for all $ m \in \mathbb{N}$.
\end{proof}

\section*{Acknowledgements}
The authors thank Riccardo Bonalli, Marco Pavone, and Dimitri Bertsekas for discussions. The authors are also grateful for advice provided by the Associate Editor and three reviewers, whose comments improved the presentation of this work.

\bibliographystyle{IEEEtran}
\bibliography{references_new}

\end{document}